\documentclass[11pt]{article}
\usepackage{amsmath,amsthm,amssymb}
\usepackage{soul}
\usepackage{amssymb}
\usepackage{bbm}
\usepackage[round]{natbib}
\usepackage[noend]{algpseudocode}
\usepackage{algorithmicx,algorithm}

\usepackage{enumerate}
\usepackage{threeparttable}
\usepackage{multirow}
\usepackage{rotating}
\usepackage{lscape}

\usepackage{bm}
\usepackage[T1]{fontenc}

\usepackage{graphicx}
\usepackage{subcaption}

\renewcommand{\thefootnote}{\arabic{footnote}}

\usepackage{xcolor} 
\usepackage{indentfirst} 

\newtheorem{theorem}{Theorem}[section]      
\newtheorem{lemma}[theorem]{Lemma}         

\newtheorem{definition}[theorem]{Definition}
\newtheorem{remark}{Remark}                

\makeatletter
\newcommand\blfootnote[1]{%
  \begingroup
  \renewcommand\thefootnote{}%
  \footnotetext{#1}%
  \addtocounter{footnote}{-1}%
  \endgroup
}
\makeatother

\begin{document}
\title{Modeling and Optimization for Massive Data Allocation in Database
}
\author{
Panpan Niu\textsuperscript{1}, Boxiang Ren\textsuperscript{2}, 
Hao Wu\textsuperscript{1}, Xin Yao\textsuperscript{2} \\[0.5em]
\textsuperscript{1}Department of Mathematical Sciences, Tsinghua University\\
\textsuperscript{2}2012 Labs, Huawei Technologies Co., Ltd
}

\maketitle
\blfootnote{
\begin{tabular}{@{}l@{}}
The first two authors contributed equally to this work.\\
Corresponding author: Xin Yao~(yao.xin1@huawei.com).
\end{tabular}%
}
\begin{abstract}
In the era of big data, e-commerce and Internet platforms face the challenge of processing massive amounts of data.
However, due to data being scattered across different machines in distributed database, extra communication costs are incurred in gathering relevant data to complete transactions.
Without a carefully designed data placement scheme, this cost can severely impact the performance of Online Transaction Processing systems.
To meet industry requirements, algorithms that output a data placement scheme that achieves i) data balance and ii) low communication overhead within a fixed period of time are eagerly investigated.
Although some existing methods have been studied, they do not adequately meet the aforementioned requirements.
In this paper, inspired by the normalized cut of spectral clustering, we introduce a novel model for data allocation problem.
The normalized cut reconciles the inherent conflict between the two objectives.
Taking into account the variable characteristics of the model, we formulate the problem as a 0-1 optimization problem, and solve the relaxed problem using the Bregman proximal gradient method with guaranteed convergence.
The numerical experiments reveal that the convergent solutions can be smoothly rounded to discrete solutions.
Furthermore, our algorithm surpasses both simple and meta-heuristic partitioning schemes by minimizing migration costs while maintaining a superior balance.
\end{abstract}

\section{Introduction}
\label{sec1}
The database management system~(DBMS)~\citep{codd1970relational} has been applied in various fields, including e-commerce systems~\citep{li2019cloud, ahmed2021enhancement}, computerized library systems~\citep{ruldeviyani2016enhancing} and geographic information systems~(GIS)~\citep{schneider1997spatial}. 
As technology advances, enterprises face an exponential increase in data volume, often reaching billions of units.
For example, Google, the largest search engine company, handles more than 85.5 billion new visits per month.
Storing such vast amounts of data in memory using a single-node DBMS is no longer feasible.
To address these challenges, a distributed database management system~(DDBMS)~\citep{ozsu1999principles} has been proposed, leveraging multiple servers to manage large-scale data.
Things become far more complicated when executing a transaction under the distributed setting. 
The DDBMS must coordinate these servers to aggregate the relevant data blocks into one node before executing that transaction.
However, if the data placement scheme is poorly designed, this process could cause serious data traffic and lead to a cliff-like drop in throughput~\citep{pavlo2012skew, shibata2010file}.
Hence, the importance of developing an effective data partitioning strategy cannot be overstated.

In the field of databases, formulating an effective data partitioning strategy is commonly known as the Data Allocation Problem~(DAP). 
Numerous algorithms have been proposed to solve DAP, including workload-agnostic partitioning techniques~\citep{dewitt1992parallel}, greedy or heuristic algorithms~\citep{atrey2020unifydr, taft2014store, pavlo2012skew, serafini2016clay}, graph-based methods~\citep{curino2010schism, quamar2013sword, golab2014distributed}, deep learning reinforcement algorithms~\citep{hilprecht2019learning, hilprecht2020learning}, etc.
Among these, graph-based methods, such as those using METIS~\citep{karypis1997metis} or hMETIS~\citep{karypis1998hmetis}, two publicly available graph partitioning libraries, have shown great potential.
However, the full potential of these methods in terms of partition performance remains underexplored~\citep{golab2014distributed}.
Furthermore, these algorithms inherently rely on various forms of greedy or heuristic techniques that focus on local graph properties, and a rigorous analysis of these approaches is often lacking~\citep{khandekar2009graph, desale2015heuristic, stanton2012streaming}.
Therefore, there remains significant interest in developing methods that yield high-quality partitioning solutions with guaranteed convergence to solve DAP.

In recent years, researchers have formulated the DAP as a series of combinatorial optimization problems~\citep{curino2010schism, taft2014store, serafini2016clay,quamar2013sword,atrey2020unifydr,golab2014distributed, yang2018hepart, firnkes2019throughput}.
Notably, the Hypergraph Partitioning Problem and the Graph Partitioning Problem (GPP) are prominent representations of these approaches. 
In the case of the former~\citep{quamar2013sword, yang2018hepart, firnkes2019throughput}, which attempts to address a more general problem, historical workloads are modeled as a hypergraph, with each hyperedge corresponding to a transaction.
However, hypergraph partitioning often results in worse performance compared to conventional graph partitioning~\citep{golab2014distributed}.
Therefore, this paper focuses on the GPP \citep{curino2010schism, serafini2016clay}.
As noted in \cite{pavlo2012skew, curino2010schism}, GPP aims to balance the workload across nodes while minimizing the number of transactions that require access to multiple servers.
A key challenge in DAP is the inherent conflict between these two objectives~\citep{serafini2016clay}.
Inspired by spectral clustering~\citep{von2007tutorial}, we apply a graph-based Normalized Cut\footnote[1]{Ratio Cut is also an available option.  However, as demonstrated in \citep{nie2010improved}, NCut often outperforms Ratio Cut and leads to more balanced clustering.}~(NCut) to address the challenges posed by GPP.
The NCut model resolves these conflicting objectives through a unified function. 

It has been proved by \cite{shi2000normalized} that the problem of minimizing NCut is NP-Complete.
Several approximation algorithms~\citep{van2008normalized,ng2001spectral,shi2000normalized,yan2009fast,dhillon2004kernel} have been proposed.
For instance, \citet{shi2000normalized} and \cite{ng2001spectral} successively propose two classical spectral methods.
However, spectral methods are impractical in scenarios involving extensive databases due to their computational complexity of $O(N^3)$, with $N$ the number of data points~\citep{yan2009fast}.
A critical observation is the capacity of the Bregman Proximal Gradient~(BPG)~\citep{beck2009fast, beck2003mirror, bauschke2003bregman} method to address this challenge.
Although BPG was initially proposed for convex optimization, it is also powerful when applied to constrained non-convex optimization problems with theoretical guarantees~\citep{bolte2018first}.

In this paper, we introduce a novel model for the DAP by formulating it as an NCut model, inspired by spectral clustering. 
The sum-of-fractions structure of the NCut model effectively captures two major concerns of DAP: the numerators aim to reduce migration costs, while the denominators promote load balancing.
To tackle the combinatorial nature of NCut, we reformulate it as an integer programming problem and further relax it into a continuous optimization problem. To address the resulting non-convexity of the relaxed problem, we employ the BPG method and provide a convergence analysis, ensuring that the algorithm reliably converges to high-quality solutions.
Numerical experiments reveal that the convergent solutions obtained by BPG can be effectively rounded to discrete solutions, demonstrating the practical feasibility of our approach. Compared to simple and meta-heuristic partitioning schemes, our algorithm achieves superior performance by minimizing migration costs while maintaining better load balance. These results highlight the potential of our method to enhance data allocation strategies in distributed database systems, where high-quality solutions are crucial for improving system throughput.

This paper is organized as follows.
Section 2 develops an NCut model for DAP.
In Section 3, we use BPG method to solve the relaxed problem and provide proofs of convergence.
The numerical experiments in Section 4 demonstrate the advantages of our approach.
Finally, Section 5 concludes the paper.
\section{DAP Modeling}
\begin{figure}[t]
    \centering
    \includegraphics[width=\textwidth]{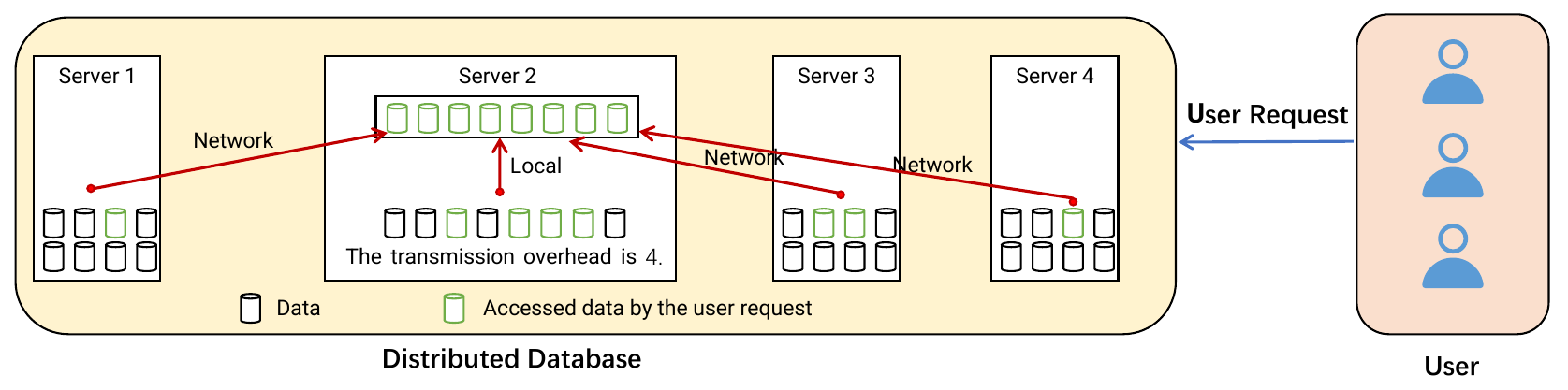}
    \label{fig:1}
    \caption{Data migration process in a distributed database. 
    The process of data migration involves transferring data fragments between different nodes. 
    The Data represents the actual data fragments stored in the leaf nodes of B+ trees~\citep{elmasri2020fundamentals}. 
    They can be simply understood as blocks of data.
    }
  \label{fig:DistributeDatabase}
\end{figure}
We first outline the main components of distributed database systems.
Figure~\ref{fig:DistributeDatabase} illustrates the processing flow of a transaction. 
The process begins when a user submits a transaction request, which can involve operations such as an update or a select query. 
The DDBMS selects a server to execute the procedural control code and the relevant queries~\citep{pavlo2011predive, pavlo2012skew}. 
In practice, the selected server is typically the one that already stores the largest portion of the data required by the transaction.
Unlike local transactions, distributed transactions involve data that is spread across multiple servers.
In such cases, data residing on other servers must be migrated to the chosen server via network communication.
For example, as shown in Figure~\ref{fig:DistributeDatabase}, server 2 contains the majority of the relevant data, while the other servers transmit the remaining data blocks to it.
Notably, the migration time involved in this process accounts for a large portion of the overall transaction processing time~\citep{pavlo2012skew}.
The migration time primarily depends on the number of transferred data.
Specifically, migration time is typically orders of magnitude larger than the partitioning time.
Therefore, we strategically allocate more time to the partitioning process to achieve a better partitioning result. This approach effectively reduces the overall migration overhead, thereby optimizing the system's overall performance.
The main objective of the DAP in Online Transaction Processing~(OLTP) systems is to minimize the number of distributed transactions while ensuring a balanced workload across all nodes.
Interestingly, despite being formulated differently, the DAP shares significant similarities in optimization objectives and constraints with the Graph Partitioning Problem, as demonstrated by previous studies~\citep{golab2014distributed, curino2010schism}. 
Leveraging these similarities, we transform the DAP into a well-established GPP framework.

In the proposed model, the data blocks in the leaf nodes of B+ trees are treated as vertices. 
Let these vertices be represented as a set, denoted by $V = \{ v_1, v_2, ..., v_N\}$.
Co-accesses between two vertices are modeled as edges, with the weight defined by the number of transactions that simultaneously access both vertices.
This results in a graph $G=(V, E)$, where $E$ represents the set of edges.
Minimizing the number of distributed transactions and maintaining a balanced workload across nodes are closely related to finding a balanced partitioning of the graph $G$~\citep{curino2010schism}.
Specifically, given $K$ subsets, a feasible partitioning scheme divides $V$ into $P_1,P_2,...,P_K$, such that $P_i\cap P_j = \emptyset$ for $i \neq j$, and $\cup_m P_m = V$.
Let
\begin{equation*}
    \bm{X}=(x_{ij})_{N\times K}=(\bm{x}_1,\bm{x}_2,...,\bm{x}_K)
\end{equation*}
where
\begin{equation*}
    x_{ij}=
    \begin{cases}
        1, & if \quad v_i\in P_j, \\
        0, & otherwise.
    \end{cases}
\end{equation*}
Since each data is uniquely assigned to a site, it satisfies the following no-replication constraint:
\begin{equation*}
    \sum_{i=1}^K \bm{x}_i=\bm{1}_N,
\end{equation*}
where $\bm{1}_N$ denotes the all-one vector in $\mathbb{R}^{N}$.
Addressing the challenge of determining a suitable level of load imbalance constraint in GPP becomes challenging, as noted by \citet{serafini2016clay}.
Here, we propose an NCut model that effectively tackles this challenge.
First, we represent the two objectives of DAP in terms of equations involving $\bm{X}$.
 To quantify the workload of a component $P_i\subseteq V$, we define $\operatorname{vol}(P)$ as follows:
\begin{equation}  \label{definition: vol(P)}
    \operatorname{vol}(P_i):=\sum_{v_j\in P_i}d_j=\sum_{v_s\in P_i, v_t\in V}w_{st}x_{si}=\bm{x}_i^T\bm{W}\bm{1}_N
\end{equation}
where, $d_i=\sum_{j=1}^n w_{ij}$ represents the weighted degree of a node $v_i \in V$, and $\bm{W}\in \mathbb{R}^{N\times N}$ is the adjacency matrix of $G$.
The $d_i$ indicates the potential number of migrations for $v_i$ across all transactions.
Thus, $\operatorname{vol}(P_i)$, utilized later for balancing, denotes the communication load undertaken by server $i$.
Next, we define cut of $P_i$ as
\begin{equation} \label{definition: W(A,B)}
    \operatorname{cut}(P_i):=\sum_{v_i\in P_i, v_j \not\in P_i}w_{ij}=\sum_{v_s\in P_i,v_t\in V}w_{st}-\sum_{v_s\in P_i,v_t\in P_i}w_{st}= \bm{x}_i^T \bm{W}(\bm{1}_N-\bm{x}_i).
\end{equation}
The $\operatorname{cut}(P_i)$ measures the total weight of edges connecting $P_i$ with other partitions.
From the perspective of DAP, the $\frac{1}{2}\sum_{i=1}^K{cut(P_i)}$ provides an approximation to the communication overhead induced by distributed transactions.
Assuming each vertex has the same size and inspired by Spectral Clustering~\citep{von2007tutorial}, the objective and balanced constraint of GPP are approximately equivalent to minimizing the NCut:
\begin{equation*}\label{eqa:operatorname{NCut}}
    \operatorname{NCut}(P_1,..,P_K)=\frac{1}{2}\sum_{i=1}^K\frac{\operatorname{cut}(P_i)}{\operatorname{vol}(P_i)}=\frac{1}{2}\sum_{i=1}^K\frac{\bm{x}_i^T \bm{W}(\bm{1}_N-\bm{x}_i)}{\bm{x}_i^T \bm{W}\bm{1}_N}.
\end{equation*}
On one hand, the function $\sum_{i=1}^K \frac{1}{\operatorname{vol}(P_i)}$ achieves its minimum when all $\operatorname{vol}(P_i)$ are equal. 
Thus, NCut assesses the load balancing of $P_i$ within a scheme $\{P_1,P_2,..., P_k\}$ using \eqref{definition: vol(P)}.
On the other hand, NCut leverages \eqref{definition: W(A,B)} to evaluate the communication overhead among partitions. 
By combining these two components, the NCut model can serve as a unified evaluation criterion that simultaneously optimizes both objectives in DAP. 
This is empirically validated in Section \ref{Sec:exp} through numerical experiments.

For further elaboration, please refer to~\citet{shi2000normalized, von2007tutorial}.

In summary, the DAP is rewritten as the following optimization problem:
\begin{subequations}\label{eqa:0-1 opt}
\begin{align}
    \mathcal{P}_1: \quad \min_{X} &\quad\sum_{i=1}^K\frac{\bm{x}_i^T \bm{W}(\bm{1}_N-\bm{x}_i)}{\bm{x}_i^T \bm{W}\bm{1}_N} \\
        s.t. &\quad \bm{x}_i\in \{0,1\}^N,\quad \forall i \in \{1,2,...,K\} \label{eq:example-b} \\
        &\quad  \sum_{i=1}^K\bm{x}_i=\bm{1}_N. \label{eq:example-c} 
\end{align}
\end{subequations}
This problem is NP-hard~\citep{andreev2004balanced}, which implies that computing exact solutions becomes computationally intractable for large-scale instances. Therefore, we focus on developing an efficient approximation algorithm that can provide near-optimal solutions within a reasonable computation time in the next section.

\section{The Algorithm and Convergence Analysis}
In this section, we present our approach for solving the relaxed problem, along with its theoretical convergence guarantees. Specifically, we develop a BPG-based approach for the relaxed problem of NCut problems and provide its convergence properties.
\subsection{The Bregman Proximal Gradient algorithm for NCut}           
The problem $\mathcal{P}_1$ is an integer programming problem, which is difficult to tackle. 
To resolve it, taking into account \eqref{eq:example-c}, we relax \eqref{eq:example-b} and obtain a continuous problem $\mathcal{P}_2$.
\begin{subequations}\label{eqa:conti opt}
\begin{align}
    \mathcal{P}_2: \quad \min_{X} &\quad\sum_{i=1}^K\frac{\bm{x}_i^T \bm{W}(\bm{1}_N-\bm{x}_i)}{\bm{x}_i^T \bm{W}\bm{1}_N}\\
        s.t. &\quad \bm{x}_i\geq 0, \quad\forall i \in \{1,2,...,K\}\\
        &\quad  \sum_{i=1}^K\bm{x}_i=\bm{1}_N.
\end{align}
\end{subequations}
Relaxations from integer programming problems to continuous problems often lead to challenges in the rounding process, i.e., transforming continuous solutions into discrete ones. Rounding techniques have been widely used and proven effective in many optimization scenarios, as demonstrated in~\cite{karger1999rounding, buchbinder2017simplex}. Our numerical experiments further validate that our approach for the continuous problem yields solutions close to a 0-1 solution in practice, thereby facilitating the rounding process and ensuring practical applicability.

We now proceed to solve the relaxed problem $\mathcal{P}_2$.
This is a non-convex optimization problem with linear constraints.
Proximal gradient (PG) methods are well-suited for this class of problems, particularly those with row/column sum constraints, where explicit solutions are often attainable.
By choosing Bregman divergence in the regularization term of PG, the non-negativity constraints are naturally absorbed into the objective function, resulting in a concise update rule.
Moreover, existing theoretical analyses on the application of Bregman Proximal Gradient to non-convex and nonsmooth optimization can be leveraged to establish convergence guarantees for our problem.
For optimization problems $\mathcal{P}_2$, we denote 
\begin{equation*}
    \mathcal{C}=\left\{\bm{X}=\{\bm{x}_1,\bm{x}_2,..,\bm{x}_K\} \in R^{N\times K} \;\middle|\; x_{ij}\geq 0,\; \sum_{i=1}^K \bm{x}_i=\bm{1}_N\right\},
\end{equation*} 
which is a product of probability simplexes, and 
\begin{equation*}
f(\bm{X})=\sum_{i=1}^K\frac{\bm{x}_i^T \bm{W}(\bm{1}_N-\bm{x}_i)}{\bm{x}_i^T \bm{W}\bm{1}_N}.
\end{equation*}
Considering the simplex constraints, we use the Bregman distance $D_h(\bm{x},\bm{y}):=h(\bm{x})-h(\bm{y})-\left\langle \nabla h(\bm{y}),\bm{x}-\bm{y} \right\rangle$~\citep{bregman1967relaxation, bauschke1997legendre} with entropy function $h(\bm{x})=\sum_{i,j} x_{ij}\ln{x_{ij}}$, which leads to a simpler iteration formula compared to Euclidean norm.
Employing the BPG iteration scheme, we obtain the following iteration formula:
\begin{equation}\label{eq:BPG iter}
\bm{X}^{t+1}=\operatorname{argmin}_{\bm{X}\in \mathcal{C}}\left\{\sum_{i} \lambda_t \left<\frac{\partial f}{\partial \bm{x}_i} (X^t), \bm{x}_i\right>+\sum_{i,j}x_{ij}ln\frac{x_{ij}}{x^t_{ij}}-(x_{ij}-x^t_{ij})\right\},
\end{equation}
where 
\begin{equation*}
\frac{\partial f}{\partial \bm{x}_i} (\bm{X})=\frac{\bm{x}_i^T \bm{W}\bm{1}_N(\bm{W}\bm{1}_N-2\bm{W}\bm{x}_i)-\bm{x}_i^T \bm{W}(\bm{1}_N-\bm{x}_i)\bm{W}\bm{1}_N}{(\bm{x}_i^T \bm{W}\bm{1}_N)^2}.
\end{equation*}
By utilizing the first-order optimality condition, the iterative formula can be written as follows:
\begin{equation}
    \bm{X}^{t+1}=P_{\mathcal{C}}(\bm{X}^t \circ e^{-\lambda_t \frac{\partial f}{\partial \bm{X}} (X^t)}),
\end{equation}
where $\circ$ denotes Hadamard product, and $P_{\mathcal{C}}(\cdot)$ represents the Bregman projection of $\mathcal{C}$.
Since $\mathcal{C}$ represents a product of probability simplexes, we can obtain $P_{\mathcal{C}}(X)$ by dividing each row of $X$ by the sum of the row.

Algorithm \ref{alg:BPG} shows the pseudo-code of our algorithm. 
Here we use superscript $t$ to denote the $t$-th iteration of a variable and utilize fixed step size $\lambda_t =\lambda$ in \eqref{eq:BPG iter}~\footnote[2]{The step size is a hyperparameter that requires careful tuning. A large value can cause divergence, while a minimal one may result in slow convergence~\citep{zeiler2012adadelta}.
Fortunately, the algorithm demonstrates stability over a wide range of step sizes.
More details are provided in Section~\ref{Sec:exp}.}.
A constant $\epsilon$ is added to the denominator of NCut objective function in both numerical experiments and theoretical analysis for numerical stability.

\begin{algorithm}
\caption{Bregman Proximal Gradient algorithm for NCut}
\label{alg:BPG} 
\begin{algorithmic}[1]
    \Require Adjacency matrix of $G$ $\bm{W} \in \mathbb{R}^{N\times N}$, number of partitions $K$, step size $\lambda$, and the number of iterations $T$
    \Ensure Data partition $\bm{y} \in \{1,\cdots,K\}^{N}$
        \State $\bm{X}^{(0)} \leftarrow \frac{1}{K}*\bm{J}
+0.1*\bm{R}$, where $\bm{J} \in \mathbb{R}^{N\times K} $ and $\bm{R} \in \mathbb{R}^{N\times K} $ are all-one matrix and uniform random matrix in $[0,1]$
        \State Normalize $\bm{X^{(0)}}$ such that the sum of each row is $1$

        \For {$t = 0$ to $T-1$}
    \For {$i = 1$ to $K$}
        \State $\bm{g}_i^{(t)}\leftarrow\frac{(\bm{x}_i^{(t)})^T \bm{W}\bm{1}_N(\bm{W}\bm{1}_N-2\bm{W} \bm{x}_i^{(t)})-(\bm{x}_i^{(t)})^T \bm{W}(\bm{1}_N-\bm{x}_i^{(t)})\bm{W}\bm{1}_N}{((\bm{x}_i^{(t)})^T\bm{W}\bm{1}_N)^2}$ \label{alg-line-matmul}
    \EndFor
     \State $\bm{G}^{(t)} \leftarrow (\bm{g}_1^{(t)},\bm{g}_2^{(t)},\ldots,\bm{g}_K^{(t)})$
    \State $\bm{Y}^{(t+1)} \leftarrow 
    \bm{X}^{(t)} \circ \exp\left(-\lambda\bm{G}^{(t)}\right)$ \label{alg-line-start}
    \State Normalize $\bm{Y}^{(t+1)}$ row-wise to obtain $\bm{X}^{(t+1)}$\label{alg-line-end}
\EndFor
\For {$i = 1$ to $N$}
    \State $y_i \leftarrow \operatorname{argmax}_{1\leq j\leq K} 
    x_{ij}^{(T)}$;
\EndFor
\end{algorithmic}
\end{algorithm}

Next, we analyze the computational complexity of Algorithm \ref{alg:BPG}. 
The adjacency matrix $\bm{W}$ is represented by a compressed format, such as CSR~(Compressed Sparse Row).
It takes $O(KM)$ time to compute the expressions in line \ref{alg-line-matmul} because of matrix-vector multiplication, where $M$ is the number of non-zero elements in the sparse matrix.  
And it takes $O(KN)$ time from line \ref{alg-line-start} to line \ref{alg-line-end}. 
Notably, compared with the traditional Euclidean distance based Bregman iteration, our entropy distance approach eliminates the need for sorting operations and reduces the computational cost of the projection operator to $O(NK)$.
Consequently, the overall computational complexity of the algorithm is $O(TKM)$.

\subsection{Convergence Analysis}
For the convenience of the proof, we first introduce the definition of L-smad here.
More details of L-smad can be found in~\cite{bolte2018first}.
\begin{definition}\citep{bolte2018first}
A pair of functions $(g,h)$ is L-smad if there exists $L\in \mathbb{R}_{++}$ such that $Lh-g$ is convex on $\mathcal{C}$.
\end{definition}

To employ the convergence analysis framework of BPG designed for non-convex and nonsmooth optimization problems, we need to establish the L-smad property of our relaxed problem, as stated in the following lemma.

\begin{lemma}\label{L-smad-hold}
Entropy function $h(\bm{X})=\sum_{i,j}x_{ij}\ln x_{ij}$ and $g(\bm{X})=\sum_{i=1}^K\frac{\bm{x}_i^T \bm{W}(\bm{1}_N-\bm{x}_i)}{\bm{x}_i^T \bm{W}\bm{1}_N + \epsilon}$ satisfy L-smad in $\mathcal{C}$. 
\end{lemma}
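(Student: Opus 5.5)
The plan is to show that for some $L>0$ the function $Lh-g$ is convex on $\mathcal{C}$. Since $\mathcal{C}$ is convex and both functions are $C^2$ on the relative interior, it suffices to prove the Hessian inequality
\begin{equation*}
\nabla^2 g(\bm{X}) \preceq L\,\nabla^2 h(\bm{X})
\end{equation*}
for all $\bm{X}$ in the relative interior of $\mathcal{C}$. Convexity then extends to the closure by continuity of $Lh-g$; recall $0\ln 0=0$.

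\textbf{Step 1: the two Hessians.} The entropy Hessian is diagonal, $\nabla^2 h(\bm{X})=\operatorname{diag}(1/x_{ij})$. On $\mathcal{C}$ every entry satisfies $0<x_{ij}\le 1$, so $\nabla^2 h(\bm{X})\succeq I$. Hence it is enough to show that $\nabla^2 g$ is bounded in operator norm on $\mathcal{C}$ by a constant $L$. Then
\begin{equation*}
\nabla^2 g\preceq L I\preceq L\nabla^2 h .
\end{equation*}
This is where the entropy kernel helps: boundedness of the variables gives a uniform lower bound on $\nabla^2 h$.

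\textbf{Step 2: bounding $\nabla^2 g$.} The function $g$ is separable across columns, $g=\sum_i g_i(\bm{x}_i)$, so its Hessian is block diagonal. Write $\bm{d}=\bm{W}\bm{1}_N\ge 0$, $a(\bm{x})=\bm{x}^T\bm{d}-\bm{x}^T\bm{W}\bm{x}$ and $b(\bm{x})=\bm{x}^T\bm{d}+\epsilon$. The quotient rule gives
\begin{equation*}
\nabla^2 g_i=\frac{\nabla^2 a}{b}-\frac{\nabla a\,\nabla b^T+\nabla b\,\nabla a^T}{b^2}+\frac{2a\,\nabla b\nabla b^T}{b^3},
\end{equation*}
where $\nabla^2 b=0$, $\nabla^2 a=-2\bm{W}$, $\nabla b=\bm{d}$ and $\nabla a=\bm{d}-2\bm{W}\bm{x}$. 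On $\mathcal{C}$ we have $\bm{x}_i\in[0,1]^N$, so the following bounds hold:
\begin{itemize}
\item $\|\nabla a\|\le 3\|\bm{W}\|\sqrt N$;
\item $|a|\le 2\|\bm{W}\|N$, and in fact $0\le a\le \bm{x}^T\bm{d}$, because $\bm{W}$ is nonnegative and $\bm{x}\le \bm{1}$;
\item $b\ge\epsilon>0$.
\end{itemize}
Each term is therefore bounded by a polynomial in $\|\bm{W}\|$, $\|\bm{d}\|$ and $N$ divided by a power of $\epsilon$. Taking $L$ to be the maximum over $i$ of these bounds finishes the argument.

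\textbf{Main obstacle.} The crucial point is that $\epsilon>0$ keeps the denominator away from zero. Without it, $\bm{x}_i\to 0$ would make $\nabla^2 g$ blow up, and no constant $L$ could work. A second issue is the boundary of $\mathcal{C}$, where $h$ is not twice differentiable. I plan to handle it by proving convexity on the relative interior and then passing to the limit, since a continuous function that is convex on a dense convex subset is convex. If a cleaner constant is wanted, one could refine the bound using $\nabla^2 h\succeq I$ blockwise. The crude constant suffices for the L-smad claim, so I would not pursue the sharper estimate.
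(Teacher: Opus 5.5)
Your proof is correct and follows essentially the same route as the paper: you separate by columns, use that the entropy Hessian $\operatorname{diag}(1/x_{ij})$ dominates the identity because $x_{ij}\le 1$, and bound the Hessian of $g$ uniformly because $\epsilon>0$ keeps the denominator away from zero. The differences are cosmetic. You compare via an operator-norm bound and the Loewner order, whereas the paper uses an entrywise bound $M/\epsilon^4$ and strict diagonal dominance with $L=NM/\epsilon^4+1$. Your relative-interior-plus-continuity step also covers the boundary points where $h$ is not twice differentiable, which the paper passes over.
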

\begin{proof}
The aim is to prove that there exists $L\in \mathbb{R}_{++}$ such that $Lh-g$ is convex on $\mathcal{C}$.
By definition, we have
\begin{equation*}
    Lh(\bm{X})-g(\bm{X})=\sum_{i=1}^{K} \left[ L\sum_{j=1}^N x_{ij}\ln x_{ij}-\frac{\bm{x}_i^T\bm{W}(\bm{1}_N-\bm{x}_i)}{\bm{x}_i^T\bm{W}\bm{1}_N+\epsilon} \right].
\end{equation*}
Since $\bm{X}$ can be separated by columns, it suffices
\begin{equation*}
    \phi(\bm{X})= L\sum_{j=1}^N x_{ij}\ln x_{ij}-\frac{\bm{x}_i^T\bm{W}(\bm{1}_N-\bm{x}_i)}{\bm{x}_i^T\bm{W}\bm{1}_N+\epsilon}
\end{equation*}
is a convex function on $\mathcal{C}_1=\{\bm{x}\in \mathbb{R}^n | \bm{0}\leq \bm{x} \leq \bm{1}_N\}$.
For $h_1(\bm{X})=L\sum_{i=1}^N x_i\ln x_i$, we have
\begin{align*}
    \frac{\partial^2h_1}{\partial x_i^2}(\bm{X})&=\frac{L}{x_i}  \\
    \frac{\partial^2h_1}{\partial x_i\partial x_j}(\bm{X})&=0.
\end{align*}
Thus, the Hessian matrix of $h_1$ is diagonal on $\mathcal{C}_1$ with diagonal elements not less than $L$.\\
For $g_1(\bm{X})=\frac{\bm{x}_i^T\bm{W}(\bm{1}_N-\bm{x}_i)}{\bm{x}_i^T\bm{W}\bm{1}_N+\epsilon}$, we have
\begin{align*}
    \frac{\partial^2g_1}{\partial x_i^2}(\bm{X})&=\frac{u_i(\bm{X})}{(\bm{x}_i^T\bm{W}\bm{1}_N+\epsilon)^4},  \\
\frac{\partial^2g_1}{\partial x_i\partial x_j}(\bm{X})&=\frac{v_{ij}(\bm{X})}{(\bm{x}_i^T\bm{W}\bm{1}_N+\epsilon)^4}.
\end{align*}
Given that $u$ and $v$ are continuous functions and bounded on the compact set $\mathcal{C}_1$, there exists a constant $M>0$ such that $|u_i(\bm{X})|\leq M$ and $ |v_{ij}(\bm{X})|\leq M$ for all $i,j\in \{1,2,...,n\}$.
Since $\bm{x}_i$ is a nonnegative vector, it follows that $(\bm{x}_i^T\bm{W}\bm{1}_N+\epsilon)^4\geq \epsilon^4$.
Therefore, each element of the Hessian of $g_1$ is bounded above by $\frac{M}{\epsilon^4}$. 
Finally, we set 
\begin{equation*}
    L=\frac{NM}{\epsilon^4}+1,
\end{equation*}
then the Hessian matrix of $\phi(\bm{X})=h_1(\bm{X})-g_1(\bm{X})$ is a symmetric strictly diagonally dominant matrix.
Thus, $\phi(\bm{X})$ is a convex function on $\mathcal{C}_1$.
The proof of Lemma~\ref{L-smad-hold} is completed.
\end{proof}
We have shown the L-smad property of $(g,h)$ in our model.
Next, we present our main convergence result.
\begin{theorem}\label{decrease}
Let $\{\bm{X}^t\}$ be the iterative sequence generated by BPG, and $0<\lambda L<1$. Then the following conclusions hold
\begin{enumerate}[(1)]
    \item $\lambda g(\bm{X}^{t+1})\leq \lambda g(\bm{X}^{t}) -(1-\lambda L)D_h(\bm{X}^{t+1},\bm{X}^t)$, the sequence $\{g(\bm{X}^t)\}_{t\in \mathbb{N}}$ is non-increasing;
    \item $\sum_{t=1}^{\infty} D_h(\bm{X}^{t+1},\bm{X}^{t})<\infty$, and $D_h(\bm{X}^{t+1}, \bm{X}^{t})\to 0$ as $t\to \infty$;
    \item $\min _{1 \leq t \leq T} D_{h}\left(\bm{X}^{t}, \bm{X}^{t-1}\right) \leq 
		\frac{\lambda}{T}\left(\frac{g\left(\bm{X}^{0}\right)-g_{*}}{1-\lambda
		L}\right)$, where $g_{*}= \inf_{\bm{X}\in \mathcal{C}} g(\bm{X})>-\infty$.
\end{enumerate}
\end{theorem}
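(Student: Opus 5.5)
The plan is to follow the sufficient-decrease argument of \citet{bolte2018first}, with Lemma~\ref{L-smad-hold} as the only problem-specific input. From the lemma, $Lh-g$ is convex on $\mathcal{C}$. The first step is to rewrite this as the \emph{extended descent lemma}: convexity of $Lh-g$ gives, for all $\bm{X},\bm{Y}\in\mathcal{C}$ (with $\bm{Y}$ in the relative interior, where $h$ is differentiable),
\begin{equation*}
g(\bm{X})\le g(\bm{Y})+\langle \nabla g(\bm{Y}),\bm{X}-\bm{Y}\rangle + L D_h(\bm{X},\bm{Y}).
\end{equation*}
To see this, apply the gradient inequality to $Lh-g$ and rearrange. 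Note also that every BPG iterate stays in the relative interior of $\mathcal{C}$, because the multiplicative update with row normalization keeps all entries strictly positive. Hence $D_h(\cdot,\bm{X}^t)$ is well defined at every step.

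For part (1), I would use the variational characterization of $\bm{X}^{t+1}$ in \eqref{eq:BPG iter}. On $\mathcal{C}$, the regularizer $\sum_{ij}x_{ij}\ln(x_{ij}/x_{ij}^t)-(x_{ij}-x_{ij}^t)$ is exactly $D_h(\bm{X},\bm{X}^t)$. Since $\bm{X}^{t+1}$ minimizes $\lambda\langle\nabla g(\bm{X}^t),\bm{X}\rangle+D_h(\bm{X},\bm{X}^t)$, comparing its value with the feasible point $\bm{X}=\bm{X}^t$ gives
\begin{equation*}
\lambda\langle\nabla g(\bm{X}^t),\bm{X}^{t+1}-\bm{X}^t\rangle + D_h(\bm{X}^{t+1},\bm{X}^t)\le 0.
\end{equation*}
Next, multiply the descent inequality (with $\bm{X}=\bm{X}^{t+1}$, $\bm{Y}=\bm{X}^t$) by $\lambda>0$ and substitute the bound above for the linear term. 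This yields $\lambda g(\bm{X}^{t+1})\le\lambda g(\bm{X}^t)-(1-\lambda L)D_h(\bm{X}^{t+1},\bm{X}^t)$. Monotonicity then follows because $D_h\ge 0$ (strict convexity of entropy) and $1-\lambda L>0$.

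For parts (2) and (3), sum the inequality of (1) over $t=0,\dots,T-1$. The right-hand side telescopes, giving
\begin{equation*}
(1-\lambda L)\sum_{t=1}^{T}D_h(\bm{X}^{t},\bm{X}^{t-1})\le\lambda\bigl(g(\bm{X}^0)-g(\bm{X}^T)\bigr)\le\lambda\bigl(g(\bm{X}^0)-g_*\bigr).
\end{equation*}
Finiteness of $g_*$ follows from $g$ being continuous on the compact set $\mathcal{C}$, since the $\epsilon$ keeps denominators positive; in fact $g\ge 0$ there. Letting $T\to\infty$ shows that the series converges, which forces $D_h(\bm{X}^{t+1},\bm{X}^t)\to0$. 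For (3), bound the minimum over $1\le t\le T$ by the average of the terms, which gives the stated $O(1/T)$ rate.

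The main obstacle is the first step, namely justifying the descent inequality and the optimality comparison rigorously on the boundary of $\mathcal{C}$, where $h$ is not differentiable. I would handle this by checking that all iterates are strictly positive, so that only the points $\bm{Y}=\bm{X}^t$ in the interior are needed. The comparison with $\bm{X}=\bm{X}^t$ needs no differentiability at all, since it is just a value comparison of the minimized objective. One remaining gap is that Lemma~\ref{L-smad-hold} proves convexity columnwise on the box $\mathcal{C}_1$. I would note that $\mathcal{C}$ is contained in the product of these boxes, and a sum of functions that are each convex in their own column is jointly convex, so convexity of $Lh-g$ on $\mathcal{C}$ follows.
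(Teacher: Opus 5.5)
Your proposal is correct and follows essentially the paper's route. The paper gives no separate proof; it defers to the sufficient-decrease analysis of \citet{bolte2018first}, which is exactly your combination of the extended descent lemma (from Lemma~\ref{L-smad-hold}), the value comparison of the BPG subproblem at $\bm{X}=\bm{X}^t$, and telescoping. Your extra checks fill in details the paper leaves implicit: that all iterates stay strictly positive, so $D_h(\cdot,\bm{X}^t)$ and the gradient inequality are well defined, and that columnwise convexity on $\mathcal{C}_1$ yields convexity of $Lh-g$ on $\mathcal{C}$.
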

Theorem~\ref{decrease} extends the proof methodology of~\citet{bolte2018first} to our setting. It guarantees the non-increasing property of the objective function and provides convergence properties of the iteration sequence in terms of Bregman distance. Furthermore, these results can be extended to the Euclidean distance through the application of Pinsker's inequality.

\begin{remark}
In this study, we have successfully demonstrated the convergence of our proposed iterative method. 
However, we have not provided a proof of convergence to the global optimum. 
This is primarily due to the fact that this problem is inherently NP-hard.
\end{remark}
\section{Numerical Experiments}\label{Sec:exp}

This section evaluates the effectiveness of the proposed Bregman Proximal Gradient algorithm through numerical simulations.
Instead of relying on a specific distributed database system, we construct synthetic workloads that simulate real OLTP transactions.
This setting allows us to focus on the algorithmic performance without interference from system-level factors such as I/O latency or network scheduling.

Each simulated workload is represented as an undirected weighted graph $G=(V,E)$, where each vertex corresponds to a data block and each edge weight indicates the number of transactions that jointly access two blocks.
To generate $G$, we assume $N$ distinct data blocks and synthesize $N$ transactions.
The transaction size follows a distribution
$|T_i| \sim \left\lfloor \ln(N) + U(0, \log_{10} N) \right\rceil$,
where $U(a,b)$ represents a uniform distribution over the interval $[a,b]$.
Two data blocks are connected if they co-occur in the same transaction, and the edge weight $w_{ij}$ equals the number of such co-occurrences.
This procedure generates structured graphs that reflect the logical access patterns of OLTP transactions.
The characteristics of all generated graphs are summarized in Table~\ref{tab: Datasets}.

\begin{table}[htbp]
\centering
\begin{tabular}{ccc}
\hline
No. of Vertices & No. of Edges & Mean Weighted Degree \\ \hline
10,000          & 605,738   & 121.8894         \\
20,000          & 1,216,335   & 122.0058         \\
40,000          & 2,892,674   & 144.8925         \\
80,000          & 6,794,955  & 170.0513         \\
160,000         & 14,725,786  & 184.1762         \\ \hline
\end{tabular}
\caption{
Characteristics of the Synthetic Graphs.
}
\label{tab: Datasets}
\end{table}

To mitigate the impact of randomness, all results are averaged over 10 experiments.
In cases where the computational time exceeds 3600s, it is marked as "-".
The BPG algorithm is implemented in Python 3.7, using the CuPy library to accelerate sparse matrix-vector multiplication on GPUs. 

In this work, we compare BPG with the following baselines, including two heuristic algorithms and a clustering method. 
\begin{itemize}
    \item \textbf{Round-robin~{(RR)}} 
    ~\citep{dewitt1992parallel} partitions data based on their data ID. Together with hash partitioning and range partitioning, they are simple algorithms widely adopted in the industry. Due to their similar performance, we only choose round-robin partitioning for comparison.
    \item \textbf{Spectral Clustering~{(SC)}} ~\citep{von2007tutorial} is a powerful clustering technique that leverages the eigenstructure of a similarity matrix to partition data into groups.
    \item \textbf{METIS} 
    ~\citep{karypis1997metis} is the best-known software package in graph partition, widely used in the industry. 
    Despite the availability of parallel versions such as ParMETIS~\citep{karypis1997parmetis} and multi-threaded versions like mt-METIS~\citep{lasalle2013multi}, we opted to use METIS as the comparative algorithm due to its superior partitioning performance.
\end{itemize}

\subsection{Algorithm Verification}
\begin{figure}[th]
    \centering
    \includegraphics[width=0.9\textwidth]{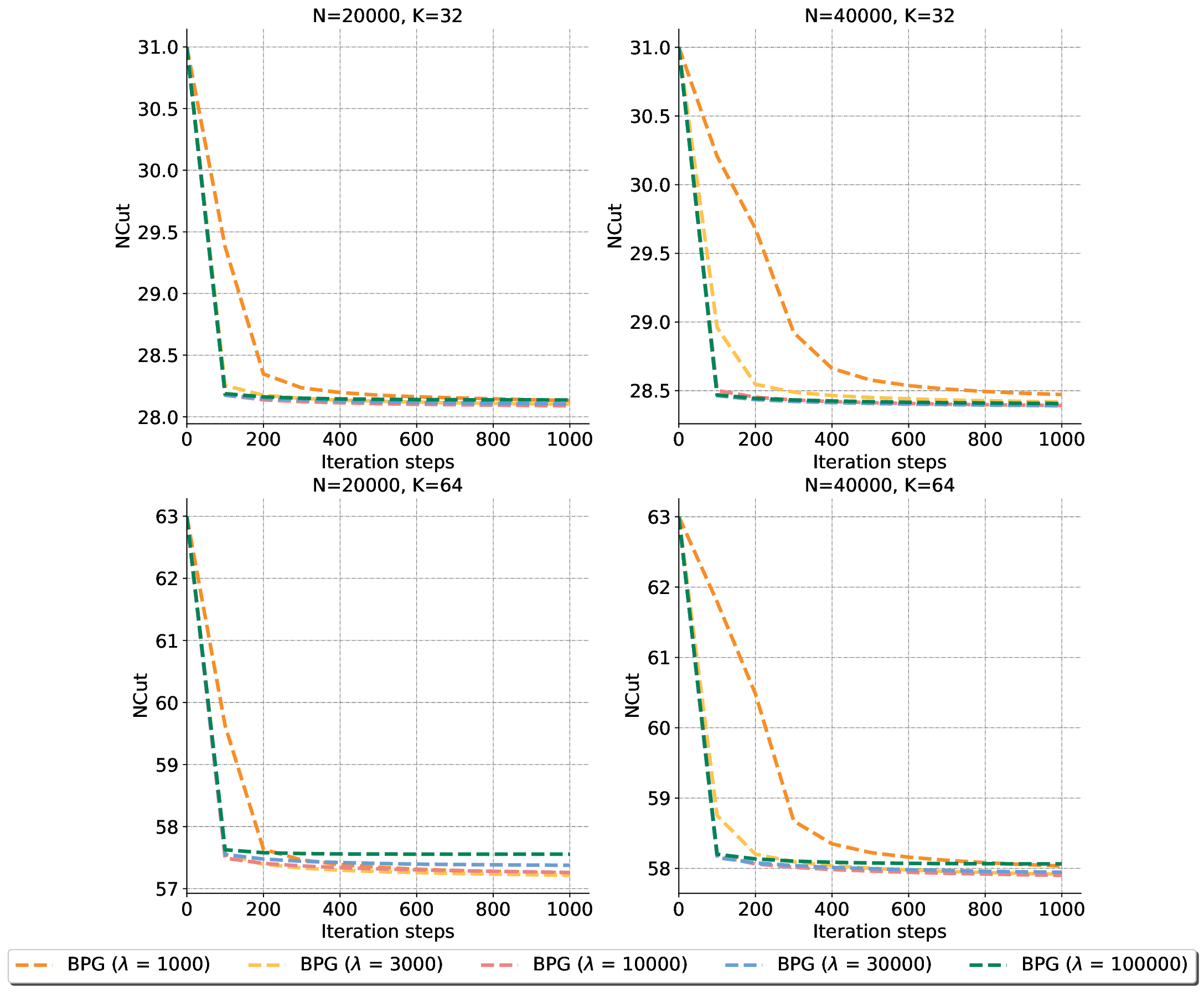}
    \caption{
    Convergence trajectories of NCut for the BPG algorithm with respect to the iteration steps under various step sizes $(\lambda)$. 
    }
  \label{fig:BPG_iteration_vs_ncut}
\end{figure}
We first investigate the convergence behavior of BPG algorithm.
Figure~\ref{fig:BPG_iteration_vs_ncut} illustrates the NCut convergence trajectories with respect to the number of iterations under different problem scales and step sizes.
Four synthetic DAPs $(N, k)$ are considered:
\begin{equation*}
    (20000, 32), (20000, 64), (40000, 32), (40000, 64)
\end{equation*}
with the step sizes $\lambda$ set to  $1000$, $3000$, $10000$, $30000$, and $100000$.
As observed, all trajectories exhibit a monotonic decrease as iterations progress, and the empirical results strongly corroborate the theoretical convergence guarantee established in Theorem~\ref{decrease}.
From the figure, it shows that the BPG algorithm demonstrates remarkable stability across a wide range of step sizes, mitigating the risk of divergence or slow convergence discussed in Footnote~2. Consequently, we select $\lambda=10000$ as the default step size for all subsequent experiments.

\begin{table}[htbp]
\centering
\begin{tabular}{cclllccc}
\hline
\multirow{2}{*}{K}  & \multirow{2}{*}{N} & \multicolumn{3}{c}{Count}                                                          & \multicolumn{3}{c}{Percentage} \\ \cline{3-8} 
                    &                    & \multicolumn{1}{c}{<0.01} & \multicolumn{1}{c}{>0.99} & \multicolumn{1}{c}{others} & <0.01     & >0.99    & others  \\ \hline
\multirow{4}{*}{32} & $1\times 10^4$              & $3.10\times 10^5$                  & $9.98\times 10^3$                  & $4.40\times 10^1$                   & 96.87\%   & 3.12\%   & 0.01\%  \\
                    & $2\times 10^4$              & $6.20\times 10^5$                  & $1.99\times 10^4$                  & $1.61\times 10^2$                   & 96.86\%   & 3.11\%   & 0.03\%  \\
                    & $4\times 10^4$              & $1.24\times 10^6$                  & $3.96\times 10^4$                  & $8.01\times 10^2$                   & 96.84\%   & 3.09\%   & 0.06\%  \\
                    & $8\times 10^4$              & $2.48\times 10^6$                  & $7.80\times 10^4$                  & $4.09\times 10^3$                   & 96.79\%   & 3.05\%   & 0.16\%  \\
\hline
\multirow{4}{*}{64} & $2\times 10^4$              & $1.26\times 10^6$                  & $1.99\times 10^4$                  & $1.25\times 10^2$                   & 98.43\%   & 1.56\%   & 0.01\%  \\
                    & $4\times 10^4$              & $2.52\times 10^6$                  & $3.97\times 10^4$                  & $5.94\times 10^2$                   & 98.43\%   & 1.55\%   & 0.02\%  \\
                    & $8\times 10^4$              & $5.04\times 10^6$                  & $7.86\times 10^4$                  & $2.84\times 10^3$                   & 98.41\%   & 1.54\%   & 0.06\%  \\
                    & $1.6\times 10^5$             & $1.01\times 10^7$                  & $1.53\times 10^5$                  & $1.46\times 10^4$                   & 98.36\%   & 1.49\%   & 0.14\%  \\ \hline
\end{tabular}
\caption{
Distribution of element values in the converged solution matrix $\bm{X}$.  
Within this table, columns 3-4 indicate the number of elements in $\bm{X}$ that are less than 0.01 and greater than 0.99, respectively.
Column 5 represents the count of remaining elements. Columns 6-8 display the corresponding percentages for the preceding three columns.
}
\label{tab:0-1}
\end{table}

To further understand the effectiveness of the continuous relaxation, we analyze the properties of the converged solution matrix $\bm{X}$.
Discrete problem $\mathcal{P}_1$ is transformed into a continuous problem $\mathcal{P}_2$ as outlined in Section 3.
Here, we conduct experiments on the properties of the converged solution $\bm{X}$ from the continuous problem $\mathcal{P}_2$ to demonstrate the effectiveness of the relaxation.
Table \ref{tab:0-1} summarizes the value distribution for 12 synthetic DAPs.
The results indicate that the vast majority of elements are close to either $0$ or $1$, with intermediate values ($0.01-0.99$) constituting less than 0.16\%. This demonstrates that the relaxation preserves near-discrete structure, ensuring that the $\bm{X}$ can be smoothly rounded to the final partition. Furthermore, the number of elements exceeding $0.99$ closely matches $N$, confirming that $\bm{X}$ effectively gives partition assignments for all vertices.

\subsection{Algorithm Comparison}
\begin{figure}[htbp]
    \centering
    \includegraphics[width=0.8\textwidth]{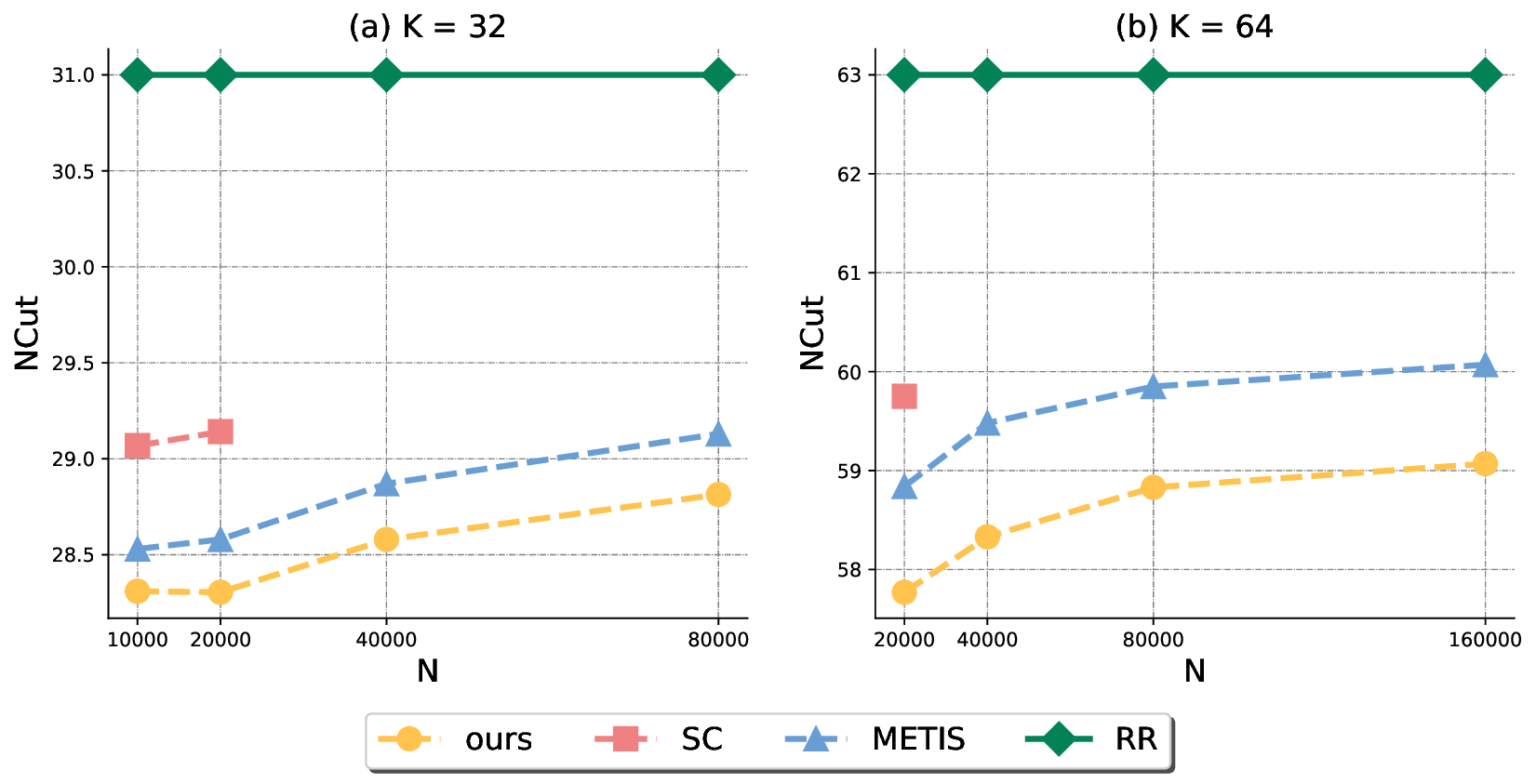}
  \caption{ 
  A comparison of BPG, SC, METIS and RR with data size N.
  Left: $K=32$. 
  Right: $K=64$. 
  The SC line is incomplete due to timeout issues.
  }
  \label{fig:BPG_iteration_vs_ncut_k200&100_com}
\end{figure}
After verifying convergence, we compare BPG against three baseline methods: SC, METIS, and RR. 
SC is executed using default parameters.
For METIS, the ufactor parameter is set to $200$ and $500$ for $K=32$ and $K=64$, respectively. 
It's noteworthy that SC can be regarded as an alternative discrete approach for minimizing NCut.
Due to the insights gained from the previous subsection, we set the number of iterations $T = 500$ for BPG to achieve a balance between solution quality and computational efficiency.

\begin{table}[htbp]
\centering
\resizebox{\textwidth}{27.5mm}{
\begin{tabular}{ccccccccccc}
\hline
\multirow{2}{*}{K}  & \multirow{2}{*}{N} & \multicolumn{4}{c}{NCut}    & \multirow{2}{*}{Rate} & \multicolumn{4}{c}{Time (s)}        \\ \cline{3-6} \cline{8-11} 
                    &                    & ours  & SC  & METIS & RR    &                       & ours  & SC  & METIS    & RR     \\ \hline
\multirow{4}{*}{32} & $1\times 10^4$            & 28.31 & 29.07    & 28.53 & 31.00 & 0.78\%                 & 4.60  & 271.53    & 0.64   & 0.0009 \\
                    & $2\times 10^4$              & 28.30 & 29.14 & 28.58 & 31.00 & 0.97\%                 & 5.08  & 1739.89 & 1.09  & 0.0019 \\
                    & $4\times 10^4$              & 28.58 & - & 28.87 & 31.00 & 1.02\%                 & 5.15  & - & 2.08  & 0.0037 \\
                    & $8\times 10^4$              & 28.81 & - & 29.13 & 31.00 & 1.09\%                 & 9.42  & - & 4.73  & 0.0072 \\
\hline
\multirow{4}{*}{64} & $2\times 10^4$              & 57.77 & 59.75 & 58.84 & 63.00 & 1.86\%                 & 8.66  & 1802.72 & 1.52  & 0.0019 \\
                    & $4\times 10^4$              & 58.33 & - & 59.48 & 63.00 & 1.96\%                 & 8.71  & - & 2.87  & 0.0036 \\
                    & $8\times 10^4$             & 58.83 & - & 59.85 & 63.00 & 1.72\%                 & 17.90 & - & 6.17  & 0.0073 \\
                    & $1.6\times 10^5$            & 59.07 & - & 60.07 & 63.00 & 1.69\%                 & 63.34 & - & 14.10 & 0.0154 \\ \hline
\end{tabular}
}
\caption{
A comparison among the four algorithms for different synthetic DAPs $(N, k)$.
Columns 3-6 report the averaged NCut, Column 7 shows BPG’s NCut reduction rate relative to METIS, and Columns 8–11 list average runtimes.
}
\label{tab:Compressibility NCut}
\end{table}
Figure~\ref{fig:BPG_iteration_vs_ncut_k200&100_com} and Table~\ref{tab:Compressibility NCut} demonstrate that BPG consistently outperforms RR and SC in partition quality, achieving lower NCut values across all datasets. Compared with METIS, BPG attains slightly better partition quality, reducing NCut by approximately 1.4\% on average. In terms of runtime, METIS remains the fastest method, benefiting from a highly optimized multilevel heuristic framework. Although BPG currently incurs higher computational cost than METIS, it operates substantially faster than the SC approach while maintaining a clear quality advantage. 
In practical database partitioning scenarios, such as initial sharding or periodic re-partitioning, this additional runtime is often acceptable, as improved partition quality directly translates into reduced communication overhead and better long-term system efficiency. 
Furthermore, since the present BPG implementation has not yet been fully optimized, significant potential remains to narrow the runtime gap in future work.

\subsection{Performance Evaluation in DAP}
According to \cite{pavlo2012skew}, it is stated that the number of data migration and the workload skew are two critical performance indicators in distributed databases.
In order to provide an intuitive representation of the DAP, we adopt the Migration Cost~(MCost) and the Mean Absolute Deviation~(MAD) of the scheme as the evaluation metrics in this subsection.
Consider a user-submitted task comprising $L$ transactions $T=\{T_1, T_2, \ldots, T_L\}$, where $T_i\subset V$ for $i = 1, 2,\ldots,L$.
Given a partitioning scheme $P=\{P_1,P_2, \ldots, P_K\}$, we define:
\begin{equation}
\begin{aligned}
    MCost(P,T)&=\sum_{i=1}^{L}\left( |T_i| - \max_{j=1,2,\ldots,K}(|T_i\cap P_j|)\right),\\
    MAD(P)&= \frac{\sum_i\left| \left|P_i\right|-\frac{N}{K} \right|}{K}.
\end{aligned}
\end{equation}
Here, $MCost(P,T)$ measures the data migration overhead required to execute task $T$ under partition $P$, while $MAD(P)$ quantifies workload imbalance across partitions.

\begin{table}[htbp]
\centering
\resizebox{\textwidth}{27.5mm}{
\begin{tabular}{cccccccccc}
\hline
\multirow{2}{*}{K}  & \multirow{2}{*}{N} & \multicolumn{4}{c}{$MCost(P,T)$}      & \multicolumn{4}{c}{$MAD(P)$}      \\ \cline{3-10} 
                    &                    & ours    & SC  & METIS   & RR      & ours  & SC  & METIS    & RR     \\ \hline
\multirow{4}{*}{32} & $1\times 10^4$              & $8.17\times 10^4$ & $8.43\times 10^4$ & $8.20\times 10^4$ & $9.47\times 10^4$   & 51.76  &  25.18 & 55.72  & 0.50 \\
                    & $2\times 10^4$              & $1.64\times 10^5$ & $1.69\times 10^5$ & $1.64\times 10^5$ & $1.89\times 10^5$  & 76.26  & 36.76 & 111.50 & 0.00\\
                    & $4\times 10^4$             & $3.63\times 10^5$ & -      & $3.65\times 10^5$ & $4.15\times 10^5$  & 103.51 & - & 221.85 & 0.00 \\
                    & $8\times 10^4$              & $7.98\times 10^5$ & -      & $8.05\times 10^5$ & $9.03\times 10^5$  & 141.18 & - & 442.68 & 0.00 \\
\hline
\multirow{4}{*}{64} & $2\times 10^4$              & $1.70\times 10^5$ & $1.76\times 10^5$ & $1.71\times 10^5$ & $1.97\times 10^5$  & 116.75 & 43.45 & 124.05 & 0.50\\
                    & $4\times 10^4$              & $3.77\times 10^5$ & -      & $3.80\times 10^5$ & $4.30\times 10^5$ &204.55& - & 248.15 & 0.00 \\
                    & $8\times 10^4$              & $8.31\times 10^5$ & -      & $8.36\times 10^5$ & $9.34\times 10^5$ &323.85& - & 495.11 & 0.00 \\
                    & $1.6\times 10^5$            & $1.74\times 10^6$ & -      & $1.75\times 10^6$ & $1.94\times 10^6$ & 499.38 & - & 990.25 & 0.00 \\ \hline
\end{tabular}
}
\caption{
Comparison of the four algorithms under different numbers of data (N).
Columns 3-10 report the averaged MCost and MAD for each method.
}
\label{tab:Compressibility Analysis}
\end{table}
The averaged MCost and MAD of the four algorithms on different data sizes with $K=32$, and $64$ are summarized in Table \ref{tab:Compressibility Analysis}.
A clear positive correlation emerges between the NCut values in Table~\ref{tab:Compressibility NCut} and the MCost and MAD results reported here.
Lower NCut values generally correspond to reduced migration cost and improved load balance, indicating that NCut serves as a comprehensive indicator of partition quality, consistent with the formulation presented in Section~2. As shown in Table \ref{tab:Compressibility Analysis}, BPG achieves consistently lower MCost than SC, METIS, and RR, with a reduction rate ranging from approximately 0.34\% to 0.99\% compared with the state-of-the-art METIS algorithm. Furthermore, BPG outperforms METIS in terms of MAD across most settings, indicating better workload balance. 
While SC and RR may achieve lower MAD values than BPG, SC suffers from high computational cost and RR yields substantially higher migration cost. Therefore, they are less competitive in the overall trade-off.
\section{Conclusion}
In this work, we investigated the data allocation problem within the context of database management systems. 
Our contributions are as follows.
Firstly, we propose a novel model for tackling DAP.
Second, taking into account the variable characteristics inherent in the model, we formulate the problem as a 0-1 optimization problem, and solve the relaxed problem using the BPG method with guaranteed convergence.
Through comprehensive experiments, we show that our proposed BPG algorithm consistently outperforms existing methods in reducing data migration costs while maintaining superior workload balance.



\bibliographystyle{plainnat}
\bibliography{ref.bib}

\end{document}